\newcommand{\toolname}{Z3strBV\xspace}
\begin{document}
\title{A Solver for a Theory of Strings and Bit-vectors\vspace{-0.5cm}}
\author{Sanu Subramanian\inst{1} \and Murphy Berzish\inst{1} \and \\ Yunhui Zheng\inst{2} \and Omer Tripp\inst{3} \and Vijay Ganesh\inst{1}}
\institute{University of Waterloo, Waterloo, Canada
  \and
  IBM Research, Yorktown Heights, USA
  \and
  Google, USA}

\maketitle

In this paper we present a solver for a many-sorted first-order
quantifier-free theory $T_{w,bv}$ of string equations, string length
represented as bit-vectors, and bit-vector arithmetic aimed at formal
verification, automated testing, and security analysis of C/C++
applications. Our key motivation for building such a solver is the
observation that existing string solvers are not efficient at modeling
the combined theory over strings and bit-vectors. Current approaches
either model such combination of theories by a reduction of strings to
bit-vectors and then use a bit-vector solver as a backend, or model
bit-vectors as natural numbers and use a backend solver for the
combined theory of strings and natural numbers. Both these approaches
are inefficient for different reasons. Modeling strings as bit-vectors
destroys a lot of structure inherent in string equations thus missing
opportunities for efficiently deciding such formulas, and modeling
bit-vectors as natural numbers is well known to be inefficient. Hence,
there is a clear need for a solver that models strings and bit-vectors
natively.

Our solver \toolname is such a decision procedure for the theory
$T_{w,bv}$ that combines a solvers for bit-vectors and string
equations. We demonstrate experimentally that \toolname is
significantly more efficient than a reduction of string/bit-vector
constraints to strings/natural numbers followed by a solver for
strings/natural numbers or modeling strings as
bit-vectors. Additionally, we prove decidability for the theory
$T_{w,bv}$. We also propose two optimizations, which can be adapted to
other contexts. The first accelerates convergence on a consistent
assignment of string lengths, and the second ---
dubbed \emph{library-aware SMT solving} --- fixes summaries for
built-in string functions (e.g., \texttt{strlen} in C/C++),
which \toolname consumes directly instead of analyzing the functions
from scratch each time.  Finally, we demonstrate experimentally
that \toolname is able to detect nontrivial overflows in real-world
system-level code, as confirmed against 7 security vulnerabilities
from CVE and Mozilla database.

\section{Introduction}

In recent years, constraint solvers have increasingly become the basis
of many tools for hardware verification \cite{stp}, program
analysis \cite{FSE13zheng,cav15,PISA,hampi} and automated
testing \cite{EXE,Cadar:2008:KUA:1855741.1855756,Sen:2013:JSR:2491411.2491447,Sen:2005:CCU:1081706.1081750,Godefroid:2005:DDA:1065010.1065036}. The
key idea is to model behaviors of interest of the subject system as
logical constraints, and then discharge the resulting constraints to a
SAT or SMT solver such that solutions generated by the solver serve as
test inputs for the system-under-verification.

Naturally, the ability to carry out reasoning in this fashion is
dependent on the expressive power and efficiency of the solver, which
has motivated significant effort in developing useful theories,
integrate them into solvers, and optimize their ability to solve such
rich classes of constraints. Examples include the quantifier-free (QF)
first-order theory of bit-vectors, which is effective in modeling
machine arithmetic \cite{stp,z3}; the QF theory of arrays, which
enables modeling of machine memory~\cite{stp,z3}; and the theory
of uninterpreted functions and integers to model abstractions of
program state \cite{Cadar:2008:KUA:1855741.1855756}.

\paragraph{\bf Existing Solutions, and the need for a New Solver:}
There are several powerful tools to reason about string-manipulating
code \cite{hampi,PISA,s3,cav15,CVC4-CAV14}. All these tools support
the theory of string equations, where the length function --- applied
to a string --- returns an arbitrary-precision (or unbounded) natural
number. (HAMPI~\cite{hampi} is an exception since it only deals with
bounded-length string variables.) While effective, these solvers are
not adequate for deciding the quantifier-free first-order many-sorted
theory $T_{w,bv}$ over the language $L_{w,bv}$ of string
concatenations, bit-vector-sorted string length terms, bit-vector
arithmetic, and equality predicate over string and bit-vector
terms. The reason is that modeling bit-vectors as natural numbers
complicates reasoning about potential overflows (or underflows), which
is an artifact of the fixed-precision machine (bit-vector)
representation of numeric values. Precise modeling of arithmetic
overflow/underflow is a key reason to model numeric values in terms of
bit-vectors and not integers, and the motivation for a large body of
work on bit-vector solvers\cite{DBLP:conf/cav/GaneshD07,z3}.

Another approach to solving $L_{w,bv}$-formulas is to represent
strings as bit-vectors. In fact many symbolic execution engines like
KLEE \cite{Cadar:2008:KUA:1855741.1855756} and
S2E \cite{Chipounov:2011:SPI:1950365.1950396} perform reasoning at
this level. They collect constraints as bit-vectors by solving branch
conditions using STP \cite{DBLP:conf/cav/GaneshD07} or
Z3 \cite{z3}. However, these engines perform poorly on programs that
make heavy use of string functions, as the low-level bit-vector
representation of program data fails to efficiently capture the
high-level semantics of the string data
type \cite{Chipounov:2011:SPV}.

As this brief survey highlights, currently there is a disturbing
gap. String solvers that are based on the theory of string equations
and linear natural number arithmetic are inadequate for solving
strings and bit-vectors given their limited ability to model overflow,
underflow, bit-wise operations, and pointer casting. At the same time,
there is lot of empirical evidence that bit-vector solvers are not
able to perform direct reasoning on strings
efficiently~\cite{anathesis}. Furthermore, they cannot handle
unbounded strings.

Hence, we were motivated to build \toolname{}, which solves
$L_{w,bv}$-formulas by treating strings and bit-vectors natively.
We did so by combining a solver for strings, augmented with
a bit-vector-sorted length function, and a solver for bit-vectors
within the Z3 SMT solver combination framework.

\paragraph{\bf Motivating Example:} This paper follows the same general motivation as that of a typical
SMT solver design and implementation. Specifically, we address the
%
increasing need of {\it efficient} solver supports to reason about security errors due to
improper string manipulations, which occur frequently in C/C++
system-level code \cite{DBLP:conf/esem/QuR11}. Improper handling of
string values carries serious ramifications, including crashes,
unintended program behaviors, and exposure to security
threats \cite{OPLDI09,OACSAC15,OISSTA13,OISSTA14}.
A constraint solver for such analyses must be able to model not only string values
but also machine-level constraints in bit-vector arithmetic, and in
particular the potential for arithmetic overflow.

\lstset{basicstyle=\scriptsize,breaklines=true}
\vspace{-0.25in}
\begin{figure}
\centering{
\begin{lstlisting}[language=C,numbers=left,xleftmargin=.1\textwidth]
bool check_login(char* username, char* password) {
  if (! validate_password(password)) {
     invalid_login_attempt(); exit(-1);
  }
  const char *salt = get_salt8(username);
  unsigned short len = strlen(password) + strlen(salt) + 1;
  if (len > 32) {
    invalid_login_attempt(); exit(-1);
  }
  char *saltedpw = (char*)malloc(len);
  strcpy(saltedpw, password);
  strcat(saltedpw, salt);
  ...
}
\end{lstlisting}
\vspace{-0.1in}
\caption{A login function with an integer overflow vulnerability}
\label{Fi:motivating}
}
\end{figure}
\vspace{-0.15in}

As an illustration, we consider an example
(Figure~\ref{Fi:motivating}) inspired by real-world vulnerabilities we
analyzed. It uses a combination of string manipulations, the string
length function, and bit-vector arithmetic, and is interesting to
analyze from the perspective of this theory combination.  The {\tt
validate\_password} at line $2$ abstracts a method that utilizes
string operators (e.g., {\tt strstr} and {\tt strcmp}) and makes sure
the input is safe, for example, that it does not contain any
non-printable characters, but does not perform any checking of the
input length.  The {\tt get\_salt8} method maps a username to a
pseudorandom 8-character string, which is concatenated with the
password to strengthen it against certain types of password cracking
attacks.  Although the input length is indirectly checked at line $7$,
there is still the threat of an overflow.  If the input consists of
$65535 - 8 = 65527$ characters, an overflow occurs when the variable
{\tt len}, which is of type {\tt unsigned short} and thus ranges over
$[0, 2^{16}-1 = 65535]$, is assigned {\tt strlen(password) +
strlen(salt) + 1}. This leads to the allocation of a buffer of size 0,
and consequently to heap corruption due to copying of {\tt password}
into the empty buffer.

Analyses of such programs based on symbolic executions with a
traditional string solver (without bit-vector support) or a purely
bit-vector solver can be {\it highly inefficient} \cite{anathesis}.  A
source of complexity is the handling of integer modulo operations
required to simulate the fixed-precision calculations and overflows.
Additionally, other string operations listed require the use of a
solver that supports strings as well as bit-vectors, as simple bounds
checking alone is insufficient to capture the relevant semantics of
this method.

\vspace{-0.1in}

\paragraph{\bf Summary of Contributions:}
The key novelty and contribution of this paper is a solver algorithm
for a combined theory of string equations, string length modelled using
bit-vectors, and bit-vector arithmetic, its theoretical underpinnings,
implementation, and evaluation over several sets of benchmarks. The
contributions, in more detail, are:

\vspace{-0.05in}

\begin{enumerate}

\item {\bf (Formal Characterization)}
We begin with a formal characterization of a theory of string
equations, string length as bit-vectors, and linear arithmetic over
bit-vectors, and provide a (constructive) proof of decidability for
this theory.  In particular, we want to stress that the decidability
of such a theory is non-trivial. At first glance it may seem that all
models for this theory have finite universes since bit-vector
arithmetic has a finite universe. However, this is misleading since
the search space over all possible strings remains infinite even when
string length is a fixed-width bit-vector due to the wraparound
semantics of bit-vector arithmetic. (For example, for any fixed length
$N$, there are infinitely many strings whose length is some constant k
modulo $N$.)

\item {\bf (Solver Algorithm)}
We then specify a practical solver algorithm for this theory that is
efficient for a large class of verification, testing, analysis and
security applications. Additionally, we formally prove that our
solving algorithm is sound.

\item {\bf (Enhancements)}
We propose two optimizations to our solver algorithm, whose
applicability reaches beyond the string/bit-vector scope. We introduce
the notion of \emph{library-aware solving}, whereby the solver reasons
about certain C/C++ string library functions natively at the contract
(or summary) level, rather than having to (re)analyze their actual
code (and corresponding constraints) each time the symbolic analysis
encounters them.  We also propose a ``binary search'' heuristic, which
allows fast convergence on consistent string lengths across the string
and bit-vector solvers. This heuristic can be of value in other theory
combinations such as the ones over theories of string and natural
number.

\item {\bf (Implementation and Evaluation)}
Finally, we describe the implementation of our solver, \toolname,
which is an extension of the Z3str2 string solver.  We present
experimental validation for the viability and significance of our
contributions, including in particular (i) the ability to detect
overflows in real-world systems using our solver, as confirmed via
reproduction of several security vulnerabilities from the CVE
vulnerability database, and (ii) the significance of the two
optimizations outlined above.
\end{enumerate}

\section{Syntax and Semantics}

\subsection{The Syntax of the Language $L_{w, bv}$}

We define the sorts and constant, function, and predicate symbols of
the countable first-order many-sorted language $L_{w, bv}$.

\noindent \textbf{Sorts:} The language is many-sorted with a string
sort $str$ and a bit-vector sort $bv$.  The language is parametric in
$k$, the width of bit-vector terms (in number of bits). The Boolean
sort $Bool$ is standard. When necessary, we write the sort of an
$L_{w,bv}$-term $t$ explicitly as $t:sort$.

\noindent \textbf{Finite Alphabet:} We assume a finite alphabet
$\Sigma$ of characters over which all strings are defined.

\noindent \textbf{String and Bit-vector Constants}: We define a
disjoint two-sorted set of constants $Con = Con_{str} \cup
Con_{bv}$. The set $Con_{str}$ is a subset of $\Sigma^{*}$, the set of
all finite-length string constants over the finite alphabet $\Sigma$.
Elements of $Con_{str}$ will be referred to as \textit{string
  constants}, or simply \textit{strings}.  $\epsilon$ denotes the
empty string. Elements of $Con_{bv}$ are binary constants over $k$
digits.  As necessary, we may subscript bit-vector constants by $bv$
to indicate that their sort is ``bit-vector''.

\noindent \textbf{String and Bit-vector Variables:} We fix a disjoint
two-sorted set of variables $var = var_{str} \cup var_{bv}$.
$var_{str}$ consists of string variables, denoted $X, Y, \hdots$ that
range over string constants, and $var_{bv}$ consists of bit-vector
variables, denoted $a, b, \hdots$ that range over bit-vectors.

\noindent \textbf{String Function Symbols:} The string function
symbols include the concatenation operator $\cdot : str \times str \to
str$ and the length function $strlen_{bv} : str \to bv$.

\noindent \textbf{Bit-vector Arithmetic Function Symbols:} The
bit-vector function symbols include binary $k$-bit addition (with
overflow) $+: bv \times bv \to bv$. Following standard practice in
mathematical logic literature, we allow multiplication by constants as
a shorthand for repeated addition.

\noindent \textbf{String Predicate Symbols:} The predicate symbols
over string terms include equality and inequality: $=,\neq : str
\times str \to Bool$.

\noindent \textbf{Bit-vector Predicate Symbols:} The predicate symbols
over bit-vector terms include $=$, $\ne$, $<$, $\le$, $>$, $\ge$ (with
their natural meaning), all of which have signature $bv \times bv \to
Bool$.

\subsection{Terms and Formulas of $L_{w, bv}$}

\noindent{\textbf{Terms:}} $L_{w,bv}$-terms may be of string or
bit-vector sort.  A string term $t_{str}$ is inductively defined as an
element of $var_{str}$, an element of $Con_{str}$, or a concatenation
of string terms.  A bit-vector term $t_{bv}$ is inductively defined as
an element of $var_{bv}$, an element of $Con_{bv}$, the length
function applied to a string term, a constant multiple of a length
term, or a sum of length terms.  (For convenience we may write the
concatenation and addition operators as $n$-ary functions, even though
they are defined to be binary operators.)

\noindent{\textbf{Atomic Formulas:}} The two types of atomic formulas
are (1) word equations ($A_{w}$) and (2) inequalities over bit-vector
terms ($A_{bv}$).

\noindent{\textbf{QF Formulas:}} We use the term ``QF formula'' to
refer to any Boolean combination of atomic formulas, where each free
variable is implicitly existentially quantified and no explicit
quantifiers may be written in the formula.

\noindent{\bf Formulas and Prenex Normal Form:} $L_{w,bv}$-formulas
are defined inductively over atomic formulas. We assume that formulas
are always represented in prenex normal form (i.e., a block of
quantifiers followed by a QF formula).

\noindent{\bf Free and Bound Variables, and Sentences:} We say that a
variable under a quantifier in a formula $\phi$ is bound. Otherwise we
refer to variables as free. A formula with no free variables is called
a sentence.

\subsection{Semantics and Canonical Model over the Language $L_{w,bv}$}

We fix a string alphabet $\Sigma$ and a bit-vector width $k$. Given
$L_{w, bv}$-formula $\phi$, an \textit{assignment} for $\phi$
w.r.t. $\Sigma$ is a map from the set of free variables in $\phi$ to
$Con_{str} \cup Con_{bv}$, where string (\emph{resp.} bit-vector)
variables are mapped to string (\emph{resp.}  bit-vector) constants.
Given such an assignment, $\phi$ can be interpreted as an assertion
about $Con_{str}$ and $Con_{bv}$.  If this assertion is true, then we
say that $\phi$ itself is \textit{true} under the assignment.  If
there is some assignment s.t. $\phi$ is true, then $\phi$ is
\textit{satisfiable}.  If no such assignment exists, then $\phi$ is
\textit{unsatisfiable}.

For simplicity we omit most of the description of the canonical model
of this theory, choosing to use the intuitive combination of
well-known models for word equations and bit-vectors.  We provide,
however, semantics for the $strlen_{bv}$ function, since it is not a
``standard'' symbol of either separate theory.

\noindent{\textbf{Semantics of the $strlen_{bv}$ Function:}} For a
string term $w$, $strlen_{bv}(w)$ denotes
an unsigned, fixed-precision bit-vector representation of
the ``precise'' integer length of $w$, truncating the arbitrary-precision bit-vector representation
of that integer to its lowest $k$ bits, for fixed bit-vector width $k$.
The bit-vector addition and
(constant) multiplication operators produce a result of the same width
as the input terms and treat both operands as though they represent
unsigned integers. Of particular note is that both of these operators
have the potential to overflow (that is, to produce a result that is
smaller than either operand).  This is a consequence of the fixed
precision of bit-vectors.
Furthermore, the $strlen_{bv}$ function itself may also ``overflow'',
because it is a fixed-width representation of an arbitrary-precision integer.
More precisely, bit-vector arithmetic has the semantics of integer arithmetic modulo $2^k$,
and the value represented by $strlen_{bv}(w)$ is the bit-vector representation of the
value in the field of integers modulo $2^k$
that is congruent to the ``precise'' integer length of $w$.

For example, if the number of bits used to represent bit-vectors is $k=3$,
a string of precise length 1 and another string of precise length 9
both have bit-vector width of ``001''. Although the complete bit-vector representation of 9
as an arbitrary-precision bit-vector would be ``1001'', the semantics of $strlen_{bv}$ specify
that all but the $k=3$ lowest bits are omitted.

Note that the search space with respect to strings is (countably) infinite
despite the fixed-width representation of string lengths as bit-vectors.
This is because the bit-vector length of a string is only a view of its precise length,
i.e. the integer number of characters in the string. This integer length may be arbitrarily finitely large.
The semantics of fixed-width integer overflow is, in essence,
applied to the integer length in order to obtain the bit-vector length.
In fact, there are infinitely many strings that have the same bit-vector length.
For example, if eight bits are used to represent string length,
strings of length 0, 256, 512, ... would all appear to have a bit-vector length of ``00000000''.

\section{Decidability of QF String Equations, String Length, and Bit-Vector Constraints}

In this section, we prove the decidability of the theory $T_{w,bv}$ of QF word
equations and bit-vectors.  Towards this goal, we first establish a
conversion from bit-vector constraints to regular languages. For
regular expressions (regexes), we use the following standard notation:
$AB$ denotes the concatenation of regular languages $A$ and $B$.
$A|B$ denotes the alternation (or union) of regular languages $A$ and
$B$.  $A^{*}$ denotes the Kleene closure of regular language $A$
(i.e., 0 or more occurrences of a string in $A$).  For a finite
alphabet $\Sigma = \{a_1, a_2, \hdots, a_l\}$, $\left[ a_1 -
a_l \right]$ denotes the union of regex $a_1 | a_2 | \hdots | a_l$.
Finally, $A^{i}$, for nonzero integer constants $i$ and regex $A$,
denotes the expression $A A \hdots A$, where the term $A$ appears $i$
times in total.

\begin{lemma} \label{lem:bv2regex}
  Let $k$ be the width of all bit-vector terms.  Suppose we have a
  bit-vector formula of the form $len_{bv}(X) = C$, where $X$ is a
  string variable and $C$ is a bit-vector constant of width $k$.  Let
  $i_{C}$ be the integer representation of the constant $C$,
  interpreting $C$ as an unsigned integer.  Then the set $M(X)$ of all
  strings satisfying this constraint is equal to the language $L$
  described by the regular expression $(\left[a_1 -
  a_l\right]^{2^{k}})^{*} \left[a_1 - a_l\right]^{i_{C}}$.
  (Refer to Appendix~\ref{proof:lem:bv2regex} for proof.)
\end{lemma}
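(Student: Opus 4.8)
The plan is to prove set equality $M(X) = L$ by showing mutual inclusion, where the key insight is that the bit-vector length constraint $\mathit{len}_{bv}(X) = C$ fixes the precise integer length of $X$ modulo $2^k$ rather than fixing it exactly. The semantics established in the previous section tell us that $\mathit{len}_{bv}(X)$ is the precise integer length of $X$ reduced modulo $2^k$; hence a string $w$ satisfies the constraint precisely when its precise length $|w|$ satisfies $|w| \equiv i_C \pmod{2^k}$. So the real content is to show that the regular expression $(\left[a_1 - a_l\right]^{2^k})^* \left[a_1 - a_l\right]^{i_C}$ describes exactly the set of strings over $\Sigma$ whose length is congruent to $i_C$ modulo $2^k$.

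First I would pin down what strings the regex matches. A string matches $\left[a_1 - a_l\right]^{2^k}$ exactly when it is an arbitrary block of exactly $2^k$ characters from $\Sigma$, so the Kleene-starred prefix $(\left[a_1 - a_l\right]^{2^k})^*$ matches any string whose length is a nonnegative multiple of $2^k$, i.e. $\{w \in \Sigma^* : |w| = m \cdot 2^k \text{ for some } m \geq 0\}$. The suffix $\left[a_1 - a_l\right]^{i_C}$ matches exactly the strings of length precisely $i_C$ over $\Sigma$. Since concatenation of languages adds lengths, the concatenation matches exactly those strings of length $m \cdot 2^k + i_C$ for some integer $m \geq 0$. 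I would note that because $0 \leq i_C < 2^k$ (as $C$ is interpreted as an unsigned $k$-bit integer), the set of achievable lengths $\{m \cdot 2^k + i_C : m \geq 0\}$ is exactly the set of nonnegative integers congruent to $i_C$ modulo $2^k$.

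With that description in hand, both inclusions follow. For $L \subseteq M(X)$: any $w \in L$ has $|w| \equiv i_C \pmod{2^k}$, so by the $\mathit{strlen}_{bv}$ semantics $\mathit{len}_{bv}(w) = C$ and thus $w \in M(X)$. For $M(X) \subseteq L$: any $w$ satisfying the constraint has $|w| \equiv i_C \pmod{2^k}$, so writing $|w| = m \cdot 2^k + i_C$ with $m \geq 0$ lets me decompose $w$ into $m$ consecutive blocks of length $2^k$ followed by a final block of length $i_C$, exhibiting a match for the regex; hence $w \in L$. The empty string case and the edge case $i_C = 0$ are handled uniformly by allowing $m = 0$ and the empty suffix match.

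The main obstacle, such as it is, is conceptual rather than technical: one must argue carefully that the length-congruence class is exactly what the bit-vector constraint carves out, invoking the precise-versus-bit-vector-length distinction emphasized in the semantics section, and that every string of the right length can actually be partitioned into the regex's blocks (which is immediate since any long-enough string over $\Sigma$ factors into fixed-length pieces). There is no decidability or termination subtlety here — the lemma is purely a characterization result bridging the arithmetic (modular) view and the language-theoretic (regular expression) view, and the crux is simply making the modular-arithmetic reasoning on lengths precise.
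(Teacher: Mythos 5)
Your proposal is correct and follows essentially the same route as the paper's proof: both reduce the constraint $len_{bv}(X)=C$ to the congruence $|w| \equiv i_C \pmod{2^k}$ via the $strlen_{bv}$ semantics, and both establish the two inclusions by decomposing a string into a prefix of length $n\cdot 2^k$ (matched by the Kleene-starred block) and a suffix of length exactly $i_C$. Your explicit observation that $0 \le i_C < 2^k$ is a small point of added care that the paper leaves implicit, but it does not change the argument.
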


\noindent{\bf Proof Idea for the Decidability Theorem~\ref{thm:strbvdecidable}:}
Intuitively, the decision procedure proceeds as follows. The crux of
the proof is to convert bit-vector constraints into regular languages
(represented in terms of regexes) relying on the lemma mentioned
above, and correctly capture overflow/underflow behavior. In order to
capture the semantics of unsigned overflow, each regex we generate has
two parts: The first part, under the Kleene star, matches strings of
length a multiple of $2^{k}$ that cause an $k$-bit bit-vector to
overflow and wrap around to the original value; the second part
matches strings of constant length $i_C$, corresponding to the part of
the string that is ``visible'' as the bit-vector length
representation. By solving the bit-vector fragment of the equation
first we can generate all of finitely many possible solutions, and
therefore check each of finitely many assignments to the bit-vector
length terms. For each bit-vector solution, we solve the word-equation
fragment separately under regular-language constraints, which
guarantee that only strings that have the expected bit-vector length
representation will be allowed as solutions. It is easy to see that
this algorithm is sound, complete, and terminating, given a decision
procedure for word equations and regex.

\begin{theorem} \label{thm:strbvdecidable}
  The satisfiability problem for the QF theory of word equations and
  bit-vectors is decidable. (Refer to Appendix~\ref{proof:thm:strbvdecidable} for proof.)
\end{theorem}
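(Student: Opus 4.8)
The plan is to decide satisfiability of an $L_{w,bv}$-formula $\phi$ by reducing it to a finite search over the bit-vector fragment, each branch of which yields a single word-equation query under regular-language constraints, with Lemma~\ref{lem:bv2regex} serving as the bridge between the two fragments. I would first \emph{purify} the formula. Since $strlen_{bv}$ is applied only to string terms, and since the precise integer length is additive while bit-vector arithmetic is arithmetic modulo $2^k$ (so reduction mod $2^k$ is a ring homomorphism that respects addition), every length subterm $strlen_{bv}(t)$ with $t = t_1 \cdot t_2 \cdots t_m$ rewrites to $strlen_{bv}(t_1) + \cdots + strlen_{bv}(t_m)$, and $strlen_{bv}$ of a string constant collapses to a known bit-vector constant. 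I introduce a fresh bit-vector variable $\ell_X$ for each string variable $X$, add the defining link $\ell_X = strlen_{bv}(X)$, and replace every length term by the corresponding bit-vector-linear combination of the $\ell_X$ and constants. After this rewriting, the word-equation atoms $A_w$ are untouched (they contain no length terms), while the bit-vector atoms $A_{bv}$ become pure bit-vector constraints over the original bit-vector variables together with the $\ell_X$; the two fragments communicate only through the shared variables $\ell_X$.

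Next I would exploit the finiteness of the bit-vector domain. The bit-vector fragment is a quantifier-free linear bit-vector formula, and the QF theory of bit-vectors is decidable; moreover each bit-vector variable ranges over the finite set $\{0, \ldots, 2^k - 1\}$, so the fragment has only finitely many satisfying assignments, which can be enumerated effectively. Each such assignment fixes a concrete value $c_X$ for every $\ell_X = strlen_{bv}(X)$. For a fixed assignment, Lemma~\ref{lem:bv2regex} converts each constraint $strlen_{bv}(X) = c_X$ into the requirement that $X$ lie in the regular language $L_X = (\left[a_1 - a_l\right]^{2^k})^{*}\left[a_1 - a_l\right]^{i_{c_X}}$. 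Substituting the fixed bit-vector values into $\phi$ leaves a Boolean combination of word equations (and, from negated atoms, word disequations) that must hold jointly with the per-variable regular constraints $X \in L_X$; placing this residual in disjunctive normal form, each disjunct is a conjunction of word equations and disequations under regular constraints, whose satisfiability is decidable by the classical decision procedure for word equations with regular constraints that the proof takes as a black box. I would then conclude that $\phi$ is satisfiable if and only if some bit-vector assignment among the finitely many candidates yields a satisfiable string subproblem. Termination and completeness follow because there are finitely many candidates and each is tested by a terminating procedure; soundness follows because the links $\ell_X = strlen_{bv}(X)$ together with Lemma~\ref{lem:bv2regex} exactly characterize the strings admitting each bit-vector length, so no spurious solution is introduced and none is lost.

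The step I expect to be the main obstacle is arguing that this reduction loses nothing despite the infinite string search space: the theorem is delicate precisely because fixing a bit-vector length does \emph{not} bound the underlying string, as wraparound lets infinitely many strings share one bit-vector length. The entire weight of that difficulty is absorbed by Lemma~\ref{lem:bv2regex}, whose Kleene-starred prefix keeps this infinite set inside the regular languages and thereby inside the decidable reach of word equations with regular constraints. Consequently the principal care in the write-up lies in two places: justifying that purification preserves length semantics under overflow (the homomorphism property of reduction mod $2^k$ applied to concatenation), and verifying that feeding each fixed bit-vector length to the word-equation solver as a regular constraint is both faithful to the $strlen_{bv}$ semantics and mutually consistent across all string variables shared between the two fragments.
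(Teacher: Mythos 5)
Your proposal follows essentially the same route as the paper's proof: enumerate the finitely many satisfying assignments of the purified bit-vector fragment, convert each fixed length value into a regular-language constraint via Lemma~\ref{lem:bv2regex}, and discharge the resulting word-equations-with-regular-constraints subproblems to Schulz's extension of Makanin's algorithm. If anything, your write-up is slightly more careful than the paper's, which assumes the input is a single word equation conjoined with bit-vector equalities, whereas you explicitly handle purification of length terms over concatenations, general Boolean structure via DNF, and disequations arising from negated atoms.
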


It may appear that the decidability result is trivial as the domain of bit-vectors is finite for fixed
width $k$, and therefore the formula could be solved by trying all $2^k$ possible assignments for each bit-vector length term
and all finitely many strings whose length is equal to each given bit-vector length. However, this is incorrect,
as the bit-vector length of a string is only a representation of its ``true length''. As the semantics of bit-vector arithmetic
specify that overflow is possible under this interpretation, strings of integer length 1, 5, 9, etc. -- indeed, infinitely many strings --
will all satisfy a constraint asserting that the 2-bit bit-vector length of a string term is 1. 
Therefore, it is not sufficient to search over, for example,
only the space of strings with length between 0 and $2^{k} - 1$. Hence the decidability of this theory
is non-trivial, and this motivates the need for a stronger argument, such as given in Theorem~\ref{thm:strbvdecidable}.

\section{\toolname Solver Algorithm}

The solver algorithm that we have designed differs from the decision
procedure described in the proof for Theorem~\ref{thm:strbvdecidable}.
There are two main reasons for that. First, that decision procedure is
completely impractical to implement as an efficient constraint solver.
Second, and related to the previous point, that decision procedure
does not leverage existing solving infrastructure, which implies
considerably more engineering effort. Our solving algorithm, on the
other hand, builds on the Z3str2 technique to solve word
equations \cite{FSE13zheng,cav15}, including in particular boundary
labels, word-equation splits, label arrangements and detection of
overlapping variables.  In contrast with Z3str2, however, (i) we
perform different reasoning about length constraints derived from the
word equations, and (ii) we have a different search-space pruning
strategy to reach consistent length assignments.

\subsection{Pseudocode Description}

The main procedure of the \toolname solver, which is similar to the
Z3str2 procedure \cite{FSE13zheng,cav15}, is summarized as
Algorithm~\ref{alg:highLevel}. It takes as input sets $\mathcal{Q}_w$
of word equations and $\mathcal{Q}_l$ of bit-vector (length)
constraints, and its output is either SAT or UNSAT or UNKNOWN.
UNKNOWN means that the algorithm has encountered overlapping
arrangements and pruned those arrangements (thereby potentially
missing a SAT solution), and a SAT solution could not be found in
remaining parts of the search space.

\algrenewcommand\alglinenumber[1]{\scriptsize #1:}
\begin{algorithm}
	{\algrenewcommand\algorithmicindent{1.5em}
		\caption{High-level description of the \toolname main algorithm.}
		\label{alg:highLevel}
		\begin{algorithmic}[1]
			{\scriptsize
				\Statex \textbf{Input:} sets $\mathcal{Q}_w$ of word equations and
				$\mathcal{Q}_l$ of bit-vector (length) constraints
				\Statex \textbf{Output:} SAT / UNSAT / UNKNOWN 
				
				\Procedure{solveStringConstraint}{$\mathcal{Q}_w$,$\mathcal{Q}_l$}
				
				\If{all equations in $\mathcal{Q}_w$ are in solved form}
				\If{$\mathcal{Q}_w$ is UNSAT or $\mathcal{Q}_l$ is UNSAT}  
				\State \Return UNSAT\label{alg:unsat1}
				\EndIf
				
				\If{$\mathcal{Q}_w$ and  $\mathcal{Q}_l$ are SAT and mutually consistent} 
				\State \Return SAT
				\EndIf
				\EndIf

				\State $\mathcal{Q}_a$ $\longleftarrow$ Convert $\mathcal{Q}_w$ into equisatisfiable DNF formula\label{alg:conv1} 
				
				\ForAll{disjunct $D$ in $\mathcal{Q}_a$}
				\State $\mathbb{A}$ $\longleftarrow$ all possible arrangements of equations in $D$\label{alg:conv2_s}
				\ForAll{arrangement $A$ in  $\mathbb{A}$}
				\State $l_A$ $\longleftarrow$ length constraints implied by $A$
				\If{$l_A$ is inconsistent with $\mathcal{Q}_l$} 
				\State $\mathbb{A}$ $\longleftarrow$ $\mathbb{A} \setminus \{\ A\ \}$
				\EndIf
				\EndFor\label{alg:conv2_e}
				\ForAll{string variable $s$ incident in $D$}\label{alg:merge_s}
				\State $G(s)$ $\longleftarrow$ merge per-equation arrangements involving $s$
				\EndFor\label{alg:merge_e}
				\ForAll{merged arrangements $a \in G(s)$ with no overlaps}
				\State $\mathcal{Q}'_w$ $\longleftarrow$ refine variables in $\mathcal{Q}_w$ per $a$\label{alg:conv3}
				\State $\mathcal{Q}'_l$ $\longleftarrow$ update length constraints $\mathcal{Q}_l$ per $\mathcal{Q}'_w$
				\State $r$ $\longleftarrow$ \textsc{SolveStringConstraint}($\mathcal{Q}^{'}_{w}$,$\mathcal{Q}^{'}_{l}$)
				\If {$r$=SAT}
				\State \Return SAT
				\EndIf
				\EndFor
				\EndFor
			
				\If{overlapping variables detected at any stage}
				\State \Return UNKNOWN
				\Else
				\State \Return UNSAT\label{alg:unsat2}
				\EndIf 
				
				\EndProcedure
			}
		\end{algorithmic}
	}
\end{algorithm}

The input to the procedure is a conjunction of constraints.
Any higher-level Boolean structure is handled by the SMT core solver,
typically a SAT solver.
The first part of the procedure (lines 2-9) check whether (i) either
$\mathcal{Q}_w$ or $\mathcal{Q}_l$ is UNSAT or (ii) both are SAT and
the solutions are consistent with each other. If neither of these
cases applies, then arrangements that are inconsistent with the length
constraints are pruned (lines 12-21). Finally, the surviving
arrangements $G(s)$ guide refinement of the word equations
$\mathcal{Q}_w$, and so also of the length constraints
$\mathcal{Q}_l$, and for each $G(s)$ the solving loop is repeated for
the resulting sets $\mathcal{Q}'_w$ and $\mathcal{Q}'_l$ (lines
22-29). A SAT answer leads to a SAT result for the entire
procedure. If no solution is found, but overlapping variables have
been detected at some point, then the procedure returns UNKNOWN.
(Note that all current practical string solvers
suffer from both incompleteness and potential non-termination.)

Notice that during the solving process, the string plug-in
(potentially) derives additional length constraints incrementally
(line 14). These are discharged to the bit-vector solver on demand,
and are checked for consistency with all the existing length
constraints (both input length constraints and constraints added
previously during solving).

More generally, during the solving process the string and bit-vector
solvers each generate new assertions in the other domain. Inside the
string theory, candidate arrangements are constrained by the
assertions on string lengths, which are provided by the bit-vector
theory. In the other direction, the string solver derives new length
assertions as it progresses in exploring new arrangements. These
assertions are provided to the bit-vector theory to prune the search
space.

\noindent {\bf Basic Length Rules:} Given strings $X,Y,Z,W,\ldots$, we express their respective lengths
$l_X,l_Y,l_Z,\ldots$ as
$strlen\_bv(X,n),strlen\_bv(Y,n),strlen\_bv(Y,n),\ldots$ respectively
in the constraint system, where $n$ is the bit-vector sort. The empty
string is denoted by $\epsilon$. Two rules govern the reasoning
process: (i) $X=Y \implies l_{X} = l_{Y}$ and (ii) $W=X \cdot Y \cdot
Z \cdot \ldots \implies l_{W} = l_{X} + l_{Y} + l_{Z}+\cdot \ldots$.

As an example, consider the word equation $X\cdot Y = M\cdot N$, where
$X,Y,M,N$ are nonempty string variables. The are three possible
arrangements \cite{cav15}, as shown below on the left, where $T_1$ and
$T_2$ are temporary string variables. The respective length
assertions, derived from these three arrangements, are listed on the
right.
\begin{center}
\begin{tabular}{lcl}
$(X = M \cdot T_1) \wedge (N = T_1 \cdot Y)$ & $\quad\quad$ &
$(l_X = l_M + l_{T_1}) \wedge (l_N = l_{T_1} + l_Y)$ \\	
$(X = M) \wedge (N = Y)$ & &
$(l_X = l_M) \wedge (l_N = l_Y)$ \\
$(M = X \cdot T_2) \wedge (Y = T_2 \cdot N)$ & &
$(l_M = l_X + l_{T_2}) \wedge (l_Y = l_{T_2} + l_N)$
\end{tabular}
\end{center}

The proof of soundness of Algorithm~\ref{alg:highLevel} is presented
in Appendix~\ref{proof:alg:highLevel}.

We conclude by noting that the Z3str2 algorithm for string constraints is terminating, as shown in~\cite{cav15},
and that it is easy to see that integrating the theory of bit-vectors with this algorithm preserves this property,
as the space of bit-vector models is finite.

\subsection{Binary Search Heuristic} 

As explained above, length assertions are added to the Z3 core, then
processed using the bit-vector theory. For efficiency, we have
developed a binary-search-based heuristic to fix a value for the
length variables in the bit-vector theory. To illustrate the
heuristic, and the need for it, we consider the example
$$
	"a" \cdot X=Y \cdot "b" \bigwedge bv8000[16]<strlen\_bv(X,16)<bv9000[16]
$$

where $bv8000[16]$ ($bv9000[16]$) denotes the constant 8000
(9000). The constraint $"a" \cdot X=Y \cdot "b"$ is discharged to the
string theory, whereas $bv8000[16]<strlen\_bv(X,16)<bv9000[16]$ is
discharged to the bit-vector solver. For the string constraint, a
(non-overlapping) solution is
$$
	X="b" \quad Y="a" \quad strlen\_bv(X,16)=strlen\_bv(Y,16)=bv1[16]
$$
but this solution is in conflict with the bit-vector constraints. Thus, the (overlapping) arrangement $X=T \cdot "b" \quad Y="a" \cdot T$
is explored, which leads to length constraints
$$
\begin{array}{lcr}
strlen\_bv(v,16)=strlen\_bv(T,16)+bv1[16] & \colon & v \in \{ X,Y \} \\
\multicolumn{3}{c}{strlen\_bv(T,16)>bv0[16]}
\end{array}
$$

Now the need arises to find consistent lengths for $X,Y,T$. Iterating
all possibilities one by one, and checking these possibilities against
the bit-vector theory, is slow and expensive. Instead, binary search
is utilized to fix lower and upper bounds for candidate lengths.

The first choice is for a lower bound of 0 and an upper bound of
$2^{16}$ (where $16$ is the bit-vector width, as indicated
above). This leads to the first candidate being
$strlen(X,16)=bv32767[16]$ (where $32767=2^{15}-1$). This fails, leading
to an update to the upper bound to be $2^{15}$, and consequently the next
guess is $strlen(X,16)=bv16383[16]$. This too falls outside the range
$(8000,9000)$, and so the upper bound is updated again, this time
becoming $2^{14}$, and the next guess is
$strlen(X,16)=bv8191[16]$. This guess is successful, and so within 3
(rather than 8000) steps the search process converges on the following
consistent length assignments:
$$
\begin{array}{lcr}
l_v=strlen(v,16)=bv8191[16] & \colon & v \in \{ X,Y\} \\
\multicolumn{3}{c}{l_T=strlen(T,16)=bv8190[16]}
\end{array}
$$

As the example highlights, in spite of the tight interaction between
the string and bit-vector theories, large values for length
constraints are handled poorly by default, since the process of
converging on consistent string lengths is linearly proportional to
those values. Pleasingly, bit-vectors, expressing a finite range of
values, enable safe lower and upper bounds. More concretely, given a
bit-vector of width $n$, the value of the length variable is in the
range $[0,2^n-1]$. Our heuristic iteratively adds length assertions to
the bit-vector theory following a binary-search pattern until
convergence on consistent length assignments. This process is both
sound and efficient.

\subsection{Library-aware Solving Heuristic}

The concept of library-aware SMT solving is simple. The basic idea is to provide native SMT solver support for
a class of library functions $f$ in popular programming languages like
C/C++ or Java, such that (i) $f$ is commonly used by programmers, (ii) uses of $f$ are a frequent source of errors (due to programmer mistakes), and (iii) symbolic analysis of $f$ is expensive due to the many paths it defines.

More precisely, by library-aware SMT solvers we mean that the logic of
traditional SMT solvers is extended with declarative summaries of
functions such as {\tt strlen} or {\tt strcpy}, expressed as global invariants
over all behaviors of such functions. The merit of declaratively
modeling such functions is that, unlike the real code implementing
these functions, the summary is free of downstream paths to explore. 
Instead, the function is modeled as a set of logical constraints, thereby offsetting the path explosion problem.

Observe, importantly, that library-aware SMT solving is complementary to summary-based symbolic
execution. To fully exploit library-aware SMT solving, one has to
modify the symbolic execution engine as well to generate summaries or
invariants upon encountering library functions. While summary-based
symbolic execution has been studied (e.g. as part of the S-Looper tool \cite{slooper}), we are not aware of any previous work where SMT
solvers directly support programming-language library functions
declaratively as part of their logic.
One recent application of a similar concept is discussed in~\cite{Jeon2016:SymExec}, where
models of design patterns are abstracted into a symbolic execution engine; being able to perform a similar analysis
at the level of individual library methods as part of a library-aware SMT solver can be very useful to enhance
library-aware symbolic execution such as demonstrated in that work.
We intend to explore this idea further in the future to broaden its applicability beyond the current context.
Furthermore, capturing program semantics precisely and concisely in a symbolic summary mandates
integration between strings (for conciseness) and bit-vectors (for modelling overflow and precise bit-level operations).
This further motivates the connection to and importance of a native solver for strings and bit-vectors.

\section{Experimental Results}

In this section, we describe our evaluation of \toolname.  The
experiments were performed on a MacBook computer, running OS X
Yosemite, with a 2.0GHz Intel Core i7 CPU and 8GB of RAM.  We have
made the \toolname code, as well as the experimental artifacts,
publicly available \cite{toolURL}.

\vspace{-0.15in}
\subsection{Experiment I: Buffer Overflow Detection}

To validate our ability to detect buffer overflows using \toolname, we
searched for such vulnerabilities in the CVE database \cite{CVEDB}. We
selected 7 cases, and specified the vulnerable code in each case as
two semantically equivalent sets of constraints --- in the
string/natural number theory and in the string/bit-vector theory ---
to compare between \toolname and Z3str2. The Z3str2 tool is one of the
most efficient implementations of the string/natural number theory.
The solvers only differ in whether string length is modelled as an integer or as a bit-vector.
We set the solver timeout for each test case at 1
hour. Figure~\ref{tab:cve} presents the results.
\toolname is able to detect all vulnerabilities, and further generate 
corresponding input values that expose/reproduce the
vulnerability. Z3str2, by contrast, provides limited support for
arithmetic overflow/underflow. Unfortunately, correctly modeling
overflow/underflow using linear arithmetic over natural numbers is
inefficient, and thus it fails within the prescribed time budget of 1
hour. Without the ability to perform overflow modelling,
Z3str2 cannot detect overflow bugs at all, since arbitrary-precision integers cannot overflow.
This experiment, therefore, shows that \toolname can find bugs that Z3str2 does not detect (due to timeouts).

\begin{figure}
	\begin{minipage}{0.5\textwidth}
		\caption{CVE Buffer Overflow Detection and Exploit Synthesis (See \cite{SanuThesis,toolURL} for details.)}
		\label{tab:cve}
		\begin{footnotesize}
			\centering
			\begin{tabular}{|c|r|c|}
				\hline
				Vulnerability
				&  \multicolumn{1}{|c|}{\toolname}   
				&  \multicolumn{1}{|c|}{Z3str2} \\ 
				\hline
				CVE-2015-3824  &   0.079s  & TO (1h) \\
				\hline
				CVE-2015-3826  &   0.108s  & TO (1h)\\
				\hline
				CVE-2009-0585  &   0.031s  & TO (1h)\\
				\hline
				CVE-2009-2463  &   0.279s  & TO (1h)\\
				\hline
				CVE-2002-0639  &   0.116s  & TO (1h)\\
				\hline
				CVE-2005-0180  &   0.029s  & TO (1h)\\
				\hline
				FreeBSD Bugzilla  &   \multirow{2}{*}{0.038s}  & \multirow{2}{*}{TO (1h)} \\
				\#137484 & & \\
				\hline
			\end{tabular}
		\end{footnotesize}
	\end{minipage}
	\hfill
	\begin{minipage}{0.5\textwidth}
		\caption{Vulnerability Detection using KLEE and Library-aware Solving.}
		\label{tab:libraryAware}
		\begin{footnotesize}
			\centering
			\begin{tabular}{|r|r|r|r|}
				\hline
				\multirow{2}{*}{\centering Prec.}
				&  \multicolumn{2}{|c|}{Lib-aware Solving} 
				&  \multirow{2}{0.2\columnwidth}{\centering KLEE}\\ 
				\cline{2-3}
				&  \toolname  & Z3str2 & \\
				\hline
				8-bit   & 0.507s & 167s & 300s \\ 
				\hline
				16-bit  & 0.270s & TO (7200s)   & TO (7200s) \\
				\hline
			\end{tabular}
		\end{footnotesize}
	\end{minipage}
\end{figure}

\subsection{Experiment II: Library-aware SMT Solving}

\begin{figure}[ht]
    \vspace{-0.25in}
    \centering
    \includegraphics[width=0.9\columnwidth]{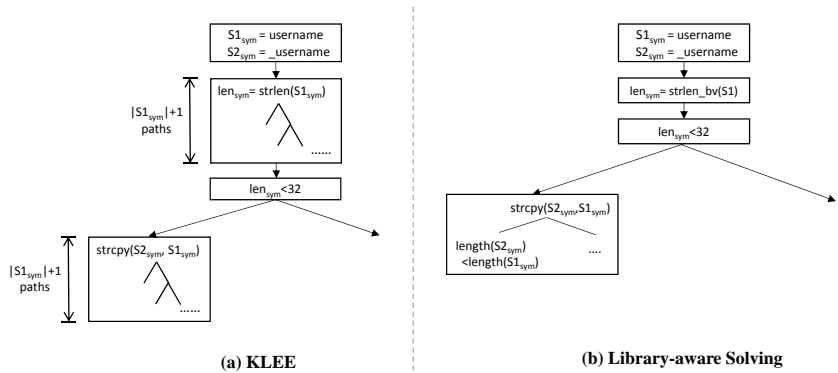}
    \label{fig:klee_libraryAware}
    \vspace{-0.15in}
    \caption{Comparison between KLEE and Library-aware Solving}
    \vspace{-0.15in}
\end{figure}

We evaluated the library-aware solving heuristic atop the example shown in Fig.~\ref{Fi:motivating} by applying both our technique and KLEE, a state-of-the-art symbolic execution engine, to this code. The goal was to detect the heap corruption threat in that code.
We faithfully encoded the program
snippet in \textsf{check\_login()} as string/bit-vector
constraints. We then checked whether the buffer pointed-to
by \textsf{\_username} is susceptible to overflow.  

Notice that \textsf{len} is an \textsf{unsigned short} variable, and thus
ranges from $0$ to $2^{16}-1 (65,535)$. As it represents the buffer
size, it determines the number of concrete execution paths KLEE has to
enumerate, as well as the search space for library-aware
solving. By contrast, the constraints generated by
library-aware SMT solver declaratively model {\tt strlen} as part of
the SMT solver logic. 

To characterize performance trends,
we consider two
different precision settings for string length: \textsf{8-bit}
and \textsf{16-bit}. We used 120 minutes as the timeout value. There was no need to go beyond 16 bits since KLEE was already
significantly slower at 16 bits relative to the library-aware SMT
solver. Note that KLEE is slow because it has to explore a large number of
paths, and not because the individual path constraints are difficult to
solve.

The results are provided in Figure~\ref{tab:libraryAware}. Under both precision settings, KLEE is
consistently and significantly slower than the library-aware solving
technique.  In particular, if we represent numeric values using 16
bits, then KLEE is not able to identify the problem in 120 minutes,
while \toolname can solve the problem in 0.27 seconds. 

The benefit thanks to library-aware solving is clear.  The analysis is as follows. Suppose both \textsf{username}
and \textsf{\_username} are symbolic string variables.  In
Figure \ref{fig:klee_libraryAware}(a), as KLEE forks a new state for
each character, an invocation of \texttt{strlen}
on a symbolic string $S1_{sym}$ of size $|S1_{sym}|$ will generate and
check $|S1_{sym}| + 1$ path constraints (one per each possible length value
between $0$ and $|S1_{sym}|$).  In
Figure \ref{fig:klee_libraryAware}(b), in contrast, the constraint encoding enabled
by library-aware solving essentially captures the semantics of the
program without explicitly handling the loop in \texttt{strlen}. 
Only one query is needed to check
whether the length $S2_{sym}$ can be smaller than the length
$S1_{sym}$.

\subsection{Experiment III: Binary Search Heuristic} 

In the case of unconstrained string variables, both Z3str2
and \toolname negotiate with the Z3 core to converge on concrete
length assignments. Z3str2 does so via a linear length search
approach. We evaluate this approach against the binary search
heuristic. For that, we have implemented a second version of \toolname
that applies linear search. We adapted benchmarks used to validate
Z3str2 \cite{z3str2Test}, resulting in a total of 109 tests, which we used to compare between the two versions. The tests make heavy use of string and bit-vector
operators. Timeout was set at 20 seconds per benchmark.  The
comparison results are presented in Table~\ref{tab:binarySearch}. We
group the instances by the solver result: \textsf{SAT}, \textsf{UNSAT},
\textsf{TIMEOUT} or \textsf{UNKNOWN}.

The \toolname solver is able to complete on all instances in $17.7$
seconds, whereas its version with linear search requires $548.1$
seconds. This version can solve simple \textsf{SAT} cases, but times
out on 26 of the harder \textsf{SAT} cases, whereas \toolname has zero
timeouts. \toolname is able to detect overlapping arrangements in 2
cases, on which it returns \textsf{UNKNOWN}. The linear-search
version, in contrast, can only complete on one of the \textsf{UNKNOWN}
instances. Notably, both solvers neither crashed nor reported any
errors on any of the instances. These results lend support to the idea
that binary search heuristic significantly faster than linear search.

\begin{table}[t]
	\centering
	\caption{Performance Comparison of Search Heuristics.}
	\label{tab:binarySearch}
	\vspace{-0.1in}
	{
		\bgroup
		\def\arraystretch{1.3}
		\resizebox{\columnwidth}{!}{
			\begin{tabular}{|c||r|r|r|r|r|r|r|r|r|r|r|r|r|r|r|r||r|c|}
				\hline
				\multirow{2}{0.2\columnwidth}{\centering \toolname}
				&  \multicolumn{4}{|c|}{SAT} 
				&  \multicolumn{4}{|c|}{UNSAT} 
				&  \multicolumn{4}{|c|}{TIMEOUT (20s)} 
				&  \multicolumn{4}{|c||}{UNKNOWN} 
				&  \multicolumn{2}{|c|}{Total} \\
				\cline{2-19}
				&  \#  & $T_{min}$ & $T_{avg}$ & $T_{max}$    
				&  \#  & $T_{min}$ & $T_{avg}$ & $T_{max}$
				&  \#  & $T_{min}$ & $T_{avg}$ & $T_{max}$
				&  \#  & $T_{min}$ & $T_{avg}$ & $T_{max}$
				&  \#  & Time(s) \\
				\hline
				Binary Search
				&  98 & 0.060 & 0.172 & 2.667
				&   9 & 0.047 & 0.081 & 0.320
				&   0 & 0 & 0 & 0
				&   2 & 0.051 & 0.085 & 0.118
				& 109 & 17.7 (\textbf{1x}) \\
				\hline
				Linear Search
				&  72 &  0.060 &  0.097 &  0.618
				&   9 &  0.061 &  0.111 &  0.415
				&  27 & 20.000 & 20.000 & 20.000
				&   1 &  0.072 &  0.072 &  0.072
				& 109 & 548.1 (\textbf{31x}) \\
				\hline
			\end{tabular}
		}
		\egroup
	}
\end{table}

\section{Related Work}

While we are unaware of existing solver engines for a \emph{combined}
QF first-order many-sorted theory of strings and bit-vectors,
considerable progress has been made in developing solvers that model
strings either natively or as bit-vectors. We survey some of the main
results in this space.

\noindent {\bf String solvers:}
Zheng et al. \cite{cav15} present a solver for the QF many-sorted
theory $T_{wlr}$ over word equations, membership predicate over
regular expressions, and length function, which consists of the string
and numeric sorts.  The solver algorithm features two main heuristics:
(i) sound pruning of arrangements with overlap between variables,
which guarantees termination, and (ii) bi-directional integration
between the string and integer theories.  S3 \cite{s3} is another
solver with similar capabilities. S3 reuses Z3str's word-equation
solver, and handles regex membership predicates via
unrolling. CVC4 \cite{CVC4-CAV14} handles constraints over the theory
of unbounded strings with length and regex membership. It is based on
multi-theory reasoning backed by the DPLL($T$) architecture combined
with existing SMT theories.  The Kleene operator in regex membership
formulas is dealt with via unrolling as in Z3str2. Unlike \toolname,
these techniques all model string length as an integer, which makes it
difficult to reason about potential overflow.  In particular, none of
these approaches combines strings and bit-vectors into a unified
theory.

Another approach is to represent string variables as a regular
language or a context-free grammar (CFG).  JSA \cite{sas03} computes
CFGs for string variables in Java programs. Hooimeijer et
al.  \cite{ase10_weimer} suggest an optimization, whereby automata are
built lazily.  Other heuristics, to eliminate inconsistencies, are
introduced as part of the Rex algorithm~\cite{rex,rex2}. To overcome
the challenge faced by automata-based approaches of capturing
connections between strings and other domains (e.g. to model string
length), refinements have been proposed.  JST \cite{JST} extends
JSA. It asserts length constraints in each automaton, and handles
numeric constraints after conversion.  PISA \cite{PISA} encodes Java
programs into M2L formulas that it discharges to the MONA solver to
obtain path- and index-sensitive string
approximations. PASS \cite{pass,SymJS} combines automata and
parameterized arrays for efficient treatment of UNSAT cases.  Stranger
extends string automata with arithmetic
automata~\cite{stranger,yu_tacas09}.  For each string automaton, an
arithmetic automaton accepts the binary representations of all
possible lengths of accepted strings. Norn~\cite{norn} relates
variables to automata. Once length constraints are addressed, a
solution is obtained by imposing the solution on variable
languages. These solutions, similarly to Z3str2, offer model string
length as an integral value, thereby failing to directly capture the
notion of overflow. The S-Looper tool \cite{slooper} addresses the
specific problem of detecting buffer overflows via summarization of
string traversal loops. The S-Looper algorithm combines static
analysis and symbolic analysis to derive a constraint system, which it
discharges to S3 to detect whether overflow conditions have been
satisfied. While S-Looper is effective, it operates under a set of
assumptions that limit its applicability (e.g. no loop nesting and
only induction variables in conditional branches). Z3str2+BV, in
contrast, is a general solution for system-level programs.

\noindent {\bf Bit-vector-based Solvers:}
Certain solvers convert string and other constraints to bit-vector
constraints.  HAMPI~\cite{hampi} is an efficient solver for string
constraints, though it requires the user to provide an upper bound on
string lengths. The bit-vector constraints that it generates are
discharged to STP~\cite{stp}.  Kaluza~\cite{kaluza} extends both STP
and HAMPI to support mixed string and numeric constraints. It
iteratively finds satisfying length solutions and converts multiple
versions of fixed-length string constraints to bit-vector problems. A
similar approach powers Pex~\cite{tacas09} to address the path
feasibility problem, though strings are reduced to integer
abstractions. The main limitation of solvers like HAMPI is the
requirement to bound string lengths.  In our approach, there is no
such limitation.

\section{Conclusion and Future Work}

We have presented \toolname, a solver for a combined quantifier-free
first-order many-sorted theory of string equations, string length, and
linear arithmetic over bit-vectors. This theory has the necessary
expressive power to capture machine-level representation of strings and
string lengths, including the potential for overflow. We motivate the
need for such a theory and solver by demonstrating our ability to
reproduce known buffer-overflow vulnerabilities in real-world
system-level software written in C/C++. We also establish a foundation
for unified reasoning about string and bit-vector constraints in the
form of a decidability result for the combined theory.

\newpage

\bibliographystyle{abbrv}
\bibliography{main,cav15,uw-ethesis}

\appendix

\newpage

\begin{subappendices}

\renewcommand{\thesection}{\Alph{section}}
  
\section{Proof of Lemma~\ref{lem:bv2regex}} \label{proof:lem:bv2regex}

We wish to establish a conversion from bit-vector constraints to
regular languages. For regular expressions (regexes), we use the
following standard notation: $AB$ denotes the concatenation of regular
languages $A$ and $B$.  $A|B$ denotes the alternation (or union) of
regular languages $A$ and $B$.  $A^{*}$ denotes the Kleene closure of
regular language $A$ (i.e., 0 or more occurrences of a string in $A$).
For a finite alphabet $\Sigma = \{a_1, a_2, \hdots, a_l\}$, $\left[
a_1 - a_l \right]$ denotes the union of regex $a_1 | a_2 | \hdots |
a_l$.  Finally, $A^{i}$, for nonzero integer constants $i$ and regex
$A$, denotes the expression $A A \hdots A$, where the term $A$ appears
$i$ times in total.

\vspace{0.2cm}
\noindent{\bf Lemma~\ref{lem:bv2regex}}:
Let $k$ be the width of all bit-vector terms.  Suppose we have a
bit-vector formula of the form $len_{bv}(X) = C$, where $X$ is a
string variable and $C$ is a bit-vector constant of width $k$.  Let
$i_{C}$ be the integer representation of the constant $C$,
interpreting $C$ as an unsigned integer.  Then the set $M(X)$ of all
strings satisfying this constraint is equal to the language $L$
described by the regular expression $(\left[a_1 -
a_l\right]^{2^{k}})^{*} \left[a_1 - a_l\right]^{i_{C}}$.

\begin{proof}
  In the forward direction, we show that $M(X) \subseteq L$. Let
  $x \in M(X)$.  $x$ satisfies the constraint $len_{bv}(x) = C$, which
  means that the integer length $z$ of $x$ modulo $2^k$ is equal to
  $i_{C}$.  Additionally, $z \ge 0$ as strings cannot have negative
  length. Then there exists a non-negative integer $n$ such that $z = n
  2^{k} + i_{C}$. We decompose $x$ into strings $u, v$ such that $uv =
  x$, the length of $u$ is $n 2^{k}$, and the length of $v$ is
  $i_{C}$. Now, $u \in (\left[a_1 - a_l\right]^{2^{k}})^{*}$ because
  its length is a multiple of $2^k$, and $v \in \left[a_1 -
  a_l\right]^{i_{C}}$ because its length is exactly $i_{C}$. By
  properties of regex concatenation, $uv \in (\left[a_1 -
  a_l\right]^{2^{k}})^{*} \left[a_1 - a_l\right]^{i_{C}}$ and
  therefore $uv \in L$.  Since $uv = x$, we have $x \in L$.

  In the reverse direction, we show that $L \subseteq M(X)$. Let
  $x \in L$.  By properties of regex concatenation, there exist
  strings $u, v$ such that $uv = x$, $u \in (\left[a_1 -
  a_l\right]^{2^{k}})^{*}$, and $v \in \left[a_1 -
  a_l\right]^{i_{C}}$.  Suppose $u$ was matched by $n$ expansions of
  the outer Kleene closure for some non-negative integer $n$.  Then the
  integer length of $u$ is $n 2^{k}$. Furthermore, the integer length
  of $v$ is $i_{C}$.  This implies that the integer length of $x$ is
  $n2^{k} + i_{C}$, which means that the integer length of $x$ is
  equal to $i_{C}$ modulo $2^{k}$, from which it directly follows that
  $len_{bv}(x) = C$. Hence $x \in M(X)$ as required. This completes
  both directions of the proof and so we have equality between the
  sets $M(X) = L$.
\end{proof}

\section{Proof of Theorem~\ref{thm:strbvdecidable}} \label{proof:thm:strbvdecidable}

\paragraph{Theorem~\ref{thm:strbvdecidable}} The satisfiability problem for the QF theory of word equations and
  bit-vectors is decidable.

\begin{proof}
  We demonstrate a decision procedure by reducing the input formula to
  a finite disjunction of subproblems in the theory of QF word
  equations and regular language constraints.  This theory is known to
  be decidable by Schulz's extension of Makanin's algorithm for solving word
  equations \cite{Schulz:1990:MAW:646900.710169}.

  Suppose the input formula $\phi$ has the form $W_1 = W_2 \land A_1 =
  B_1 \land A_2 = B_2 \land \hdots \land A_n = B_n$, where $W_1, W_2$
  are terms in the theory of word equations and $A_1 \hdots A_n,
  B_1 \hdots B_n$ are terms in the theory of bit-vectors.  Let $k$ be
  the width of all bit-vector terms.  For each term of the form
  $len_{bv}(X_i)$ in $A_1 \hdots A_n, B_1 \hdots B_n$, replace it with
  a fresh bit-vector variable $v_i$ and collect the pair $(v_i, X_i)$
  in a set $\mathcal{S}$ of substitutions.  Suppose there are $m$ such
  pairs. Then the total number of bits among all variables introduced
  this way is $mk$.  This means that there are $2^{mk}$ possibilities
  for the values of $v_1 \hdots v_m$.  Because the theory of QF
  bit-vectors is decidable and because there are finitely many
  possible values for $v_1 \hdots v_m$, we can check the
  satisfiability of the bit-vector fragment of the input formula $A_1
  = B_1 \land A_2 = B_2 \land \hdots \land A_n = B_n$ for all possible
  substitutions of values for $v_1 \hdots v_m$ in finite time. For
  each assignment $A = \{ (v_1, C_1), (v_2, C_2), \hdots, (v_m,
  C_m) \}$, where each $v_i$ is a variable and each $C_i$ is a
  bit-vector constant, if the bit-vector constraints are satisfiable
  under that assignment, collect $A$ in the set $\mathcal{A}$ of all
  satisfying assignments.  If the set $\mathcal{A}$ is empty, then the
  bit-vector constraints were not satisfiable under any assignment to
  $v_1 \hdots v_m$.  In this case we terminate immediately and decide
  that the input formula is UNSAT, as the bit-vector constraints must
  be satisfied for satisfiability of the whole formula. Otherwise,
  construct the formula $R'(\phi)$ as follows.  For each assignment
  $A \in \mathcal{A}$, for each term $(v_i, C_i) \in A$, we find the
  pair $(v_i, X_i) \in \mathcal{S}$ with corresponding $v_i$.  Because
  each variable $v_i$ corresponds to a term $len_{bv}(X_i)$, and since
  we have $v_i = C_i$, we have the constraint $len_{bv}(X_i) =
  C_i$. This allows us to apply Lemma~\ref{lem:bv2regex} and generate
  a regular language constraint $X_i \in L_i$.  After generating each
  such regular language constraint, we collect $R'(\phi) :=
  R(\phi) \lor (W_1 = W_2 \land X_1 \in L_1 \land X_2 \in
  L_2 \land \hdots \land X_m \in L_m)$.  We repeat this for each
  assignment $A \in \mathcal{A}$.  The resulting formula $R(\phi) =
  (W_1 = W_2) \land R'(\phi)$ is a conjunction of the original word
  equation from $\phi$ and a finite disjunction of regular-language
  constraints over variables in that word equation.  We now invoke
  Schulz's algorithm to solve this formula. If the word equation and
  any disjunct are satisfiable, then we report that the original
  formula $\phi$ is SAT; otherwise, $\phi$ is UNSAT. Finally, it is
  easy to show that the reduction is sound, complete, and terminating
  for all inputs (see Appendix~\ref{app:reduction} for soundness and completeness proof
of the reduction).
\end{proof}

\section{Proof of Soundness and Completeness of the Reduction used in Theorem ~\ref{thm:strbvdecidable}}\label{app:reduction}

 We demonstrate that the reduction from bit-vector constraints to
 regular language constraints, as performed in the proof for
 Theorem \ref{thm:strbvdecidable}, is sound and complete. We do so by
 showing equisatisfiability between $\phi$ and $R(\phi)$.

 \begin{theorem} \label{thm:strbvequisat}
   $\phi$ is satisfiable iff $R(\phi)$ is satisfiable.
 \end{theorem}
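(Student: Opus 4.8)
The plan is to prove the equisatisfiability $\phi \iff R(\phi)$ by establishing both directions separately, exploiting the structure of the reduction: $R(\phi)$ is a disjunction indexed by the set $\mathcal{A}$ of satisfying bit-vector assignments, and each disjunct pairs the original word equation $W_1 = W_2$ with regular-language constraints $X_i \in L_i$ derived via Lemma~\ref{lem:bv2regex}. The key observation tying the two formulas together is that Lemma~\ref{lem:bv2regex} gives an \emph{exact} characterization: the language $L_i$ equals precisely the set of strings $x$ with $len_{bv}(x) = C_i$. This exactness is what makes both directions go through, so I would state it up front as the workhorse.

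\medskip

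\noindent\textbf{Forward direction ($\phi \Rightarrow R(\phi)$).} First I would assume $\phi$ is satisfiable and fix a satisfying assignment. Restricting this assignment to the string variables $X_1, \ldots, X_m$ determines concrete lengths, and setting $v_i := len_{bv}(X_i)$ (truncated to $k$ bits) yields a concrete bit-vector assignment $A = \{(v_i, C_i)\}$. Since the full assignment satisfies all bit-vector constraints $A_1 = B_1 \land \cdots \land A_n = B_n$, and these constraints depend on the $X_i$ only through the terms $len_{bv}(X_i)$ which equal the $v_i$, the assignment $A$ satisfies the bit-vector fragment and is therefore collected in $\mathcal{A}$. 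It then suffices to show the corresponding disjunct of $R(\phi)$ is satisfied: the word equation $W_1 = W_2$ holds by hypothesis, and for each $i$, the forward inclusion of Lemma~\ref{lem:bv2regex} (i.e. $M(X_i) \subseteq L_i$) gives $X_i \in L_i$ because $len_{bv}(X_i) = C_i$. Hence $R(\phi)$ is satisfiable.

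\medskip

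\noindent\textbf{Reverse direction ($R(\phi) \Rightarrow \phi$).} Conversely, I would assume $R(\phi)$ is satisfiable; then some disjunct indexed by an assignment $A = \{(v_i, C_i)\} \in \mathcal{A}$ is satisfied. This gives a string assignment satisfying $W_1 = W_2$ together with $X_i \in L_i$ for each $i$. By the reverse inclusion of Lemma~\ref{lem:bv2regex} (i.e. $L_i \subseteq M(X_i)$), membership $X_i \in L_i$ forces $len_{bv}(X_i) = C_i$ for every $i$. Because $A \in \mathcal{A}$, the constant assignment $\{v_i \mapsto C_i\}$ already satisfies the bit-vector fragment $A_1 = B_1 \land \cdots \land A_n = B_n$; substituting back $v_i = len_{bv}(X_i) = C_i$ shows the \emph{same} string assignment satisfies the original bit-vector constraints. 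Combined with $W_1 = W_2$, this assignment witnesses satisfiability of $\phi$.

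\medskip

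\noindent The main obstacle to watch for is making the ``substituting back'' step airtight: I must verify that replacing each fresh variable $v_i$ by the term $len_{bv}(X_i)$ is semantics-preserving in both directions, which relies on the fact that the substitution set $\mathcal{S}$ records a bijective correspondence between the $v_i$ and the distinct length terms, and that no bit-vector constraint references a length term except through its assigned $v_i$. The exactness of Lemma~\ref{lem:bv2regex} (not merely one inclusion) is precisely what closes the potential gap where a string could satisfy the regex constraint without having the intended bit-vector length, so I would emphasize that both inclusions of the lemma are essential — one for each direction of the equisatisfiability.
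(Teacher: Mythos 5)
Your proposal is correct and follows essentially the same route as the paper's own proof: both directions exploit the shared word equation $W_1 = W_2$, use the forward inclusion $M(X_i) \subseteq L_i$ of Lemma~\ref{lem:bv2regex} to satisfy the disjunct induced by a model of $\phi$, and use the reverse inclusion together with the fact that only assignments in $\mathcal{A}$ (i.e., those under which the bit-vector fragment is satisfiable) generate disjuncts to recover a model of $\phi$. Your version is, if anything, slightly more explicit than the paper's about why the induced assignment $A$ lands in $\mathcal{A}$ and about the substitution correspondence recorded in $\mathcal{S}$.
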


 \begin{proof} In the forward direction, we show that if $\phi$ is
   satisfiable then $R(\phi)$ is satisfiable.  Let $M$ be a satisfying
   assignment of all variables in $\phi$.  Because $\phi$ and
   $R(\phi)$ share the same constraint $W_1 = W_2$, $M$ is a
   satisfying assignment for the word equation fragment of $R(\phi)$
   as well.  It remains to show that at least one of the terms in
   $R'(\phi)$, the disjunction of regular language constraints, is
   satisfiable. The algorithm described in
   Theorem~\ref{thm:strbvdecidable} generates one group of regex
   constraints for each satisfying assignment to the bit-vector
   fragment that produces a distinct model for all bit-vector length
   constraints. In particular, the algorithm generates
   regular language constraints for the particular model described by
   $M$ of bit-vector length constraints.  Because the string variables
   in $M$ satisfy these constraints, we apply Lemma~\ref{lem:bv2regex}
   to find that the regular language constraints that were generated
   with respect to this model $M$ are satisfied by the assignment of
   all string variables in $M$.  Therefore $M$ is also a model of
   $R(\phi)$, and hence $R(\phi)$ is satisfiable.

   In the reverse direction, we show that if $R(\phi)$ is satisfiable
   then $\phi$ is satisfiable.  Let $M$ be a satisfying assignment of
   all variables in $R(\phi)$.  Because $R(\phi)$ and $\phi$ share the
   same constraint $W_1 = W_2$, $M$ is a satisfying assignment for the
   word-equation fragment of $\phi$ as well. It remains to show that
   the bit-vector constraints in $\phi$ are satisfiable under this
   assignment to the string variables. Let $r$ be a regular-language
   constraint in $R'(\phi)$ (the disjunction of regular-language
   constraints), such that $r$ evaluates to true under the assignment
   $M$. We know that such an $r$ must exist because the formula
   $R(\phi)$ is satisfiable, and therefore at least one of the terms
   in the disjunction $R'(\phi)$ must evaluate to true. By applying
   Lemma~\ref{lem:bv2regex} ``backwards'', we can derive an assignment
   of constants to bit-vector length terms in $\phi$ corresponding to
   each regular-language constraint in $r$ that is consistent with the
   lengths of the string variables. We also know that the bit-vector
   constraints are satisfiable under this assignment of constants to
   strings and bit-vector length terms because, by
   Lemma~\ref{lem:bv2regex}, a precondition for the appearance of any
   term $r$ in $R'(\phi)$ is that the bit-vector fragment of $\phi$ is
   satisfiable under the partial assignment to bit-vector length terms
   that yielded $r$. Therefore, by solving the remaining bit-vector
   constraints, which must be satisfiable, $M$ can be extended to a
   model of $\phi$ and hence $\phi$ is satisfiable.  \end{proof}
\end{subappendices}

\section{Proof of the Soundness of Algorithm~\ref{alg:highLevel}} \label{proof:alg:highLevel}

We use the standard definition of soundness for decision procedures
from the SMT literature \cite{cav15}, whereby a solver is sound if
whenever the solver returns UNSAT, the input formula is indeed
unsatisfiable.

\vspace{-0.05in}
\begin{theorem} 
	Algorithm~\ref{alg:highLevel} is sound, i.e., when
	Algorithm~\ref{alg:highLevel} reports UNSAT, the input
	constraint is indeed UNSAT.
\end{theorem}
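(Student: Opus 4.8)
The plan is to prove the statement directly by identifying every program point at which Algorithm~\ref{alg:highLevel} can return UNSAT, and arguing that each such return is justified by the genuine unsatisfiability of the input conjunction $\mathcal{Q}_w \wedge \mathcal{Q}_l$. Inspecting the pseudocode, UNSAT is emitted at exactly two sites: line~\ref{alg:unsat1}, the base case reached once all equations are in solved form, and line~\ref{alg:unsat2}, the exhaustive case reached after all arrangements have been explored without success. Since the procedure is recursive, I would structure the whole argument as an induction on the well-founded recursion measure that underlies the termination guarantee inherited from Z3str2 (the finite bit-vector model space together with the terminating word-equation search of~\cite{cav15}).

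First I would dispatch the base case. At line~\ref{alg:unsat1} the guard asserts that either $\mathcal{Q}_w$ or $\mathcal{Q}_l$ is individually UNSAT. Because the input is a conjunction, unsatisfiability of either conjunct immediately yields unsatisfiability of $\mathcal{Q}_w \wedge \mathcal{Q}_l$; this step is essentially definitional and relies only on the soundness of the underlying word-equation and bit-vector decision procedures when restricted to solved-form constraints.

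The core of the proof is the exhaustive case at line~\ref{alg:unsat2}. Here I would establish a \emph{satisfiability-preservation} invariant for the transformation pipeline: (i) the DNF conversion at line~\ref{alg:conv1} is equisatisfiable with $\mathcal{Q}_w$; (ii) for each disjunct, the enumerated arrangements (lines~\ref{alg:conv2_s}--\ref{alg:conv2_e}) are \emph{complete}, i.e.\ every model of the disjunct realizes one of them, a property imported directly from the Z3str2 arrangement calculus; (iii) the pruning that discards an arrangement $A$ whose implied length constraints $l_A$ are inconsistent with $\mathcal{Q}_l$ is sound, since any model of the full conjunction must satisfy both $l_A$ (forced by the realized arrangement via the basic length rules $X = Y \Rightarrow l_X = l_Y$ and $W = X\cdot Y\cdot\cdots \Rightarrow l_W = l_X + l_Y + \cdots$) and $\mathcal{Q}_l$, so a pruned arrangement cannot carry a model; and (iv) the refinement at line~\ref{alg:conv3} and the on-demand length constraints discharged to the bit-vector solver yield subproblems $(\mathcal{Q}'_w, \mathcal{Q}'_l)$ that are satisfiable exactly when the corresponding branch of the original is. Given these facts, reaching line~\ref{alg:unsat2} means the guard detecting overlapping variables evaluated false, so \emph{no} overlapping arrangement was pruned at any stage; hence the only arrangements omitted were removed by the sound length-based pruning of~(iii), and every remaining branch returned UNSAT. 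By the induction hypothesis each explored subproblem is genuinely UNSAT, and by the satisfiability-preservation invariant the absence of a model across all complete, non-overlapping arrangements forces $\mathcal{Q}_w \wedge \mathcal{Q}_l$ to be UNSAT.

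The main obstacle I anticipate is item~(iii) in combination with the exhaustiveness bookkeeping: I must argue carefully that the only satisfiability-affecting arrangements ever \emph{silently} dropped are the overlapping ones (which divert the algorithm to UNKNOWN rather than UNSAT), so that a fall-through to line~\ref{alg:unsat2} certifies a truly complete case split. Formalizing that the constraints $l_A$ are \emph{necessary} conditions for any model realizing $A$---rather than merely sufficient---is the delicate point, since it is precisely this necessity that makes length-based pruning lossless for soundness even though it can cooperate with overlap pruning to sacrifice completeness (the source of the UNKNOWN verdict).
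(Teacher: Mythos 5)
Your proposal takes essentially the same route as the paper's proof: a case split on the two UNSAT return sites, with line~\ref{alg:unsat1} discharged by soundness of the underlying word-equation and bit-vector procedures, and line~\ref{alg:unsat2} discharged by showing that each transformation in the pipeline (DNF conversion, arrangement generation, length-based pruning, merging, refinement) is satisfiability-preserving. If anything, your version is slightly more careful than the paper's terse argument---you make the recursion induction explicit and correctly stress that the arrangement enumeration must be \emph{complete} (every model realizes some arrangement) and that $l_A$ is a \emph{necessary} consequence of realizing $A$, which is exactly the direction needed for UNSAT soundness---but the decomposition and key ideas coincide.
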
 
\vspace{-0.05in}

\begin{proof}
First, line \ref{alg:unsat1} returns an UNSAT if
either the string or the bit-vector constraints are determined to be UNSAT. For
string constraints, we use the algorithms described in \cite{ganesh2012} to decide the satisfiability of word (dis)equations. Soundness follows from
the soundness of the procedure \cite{ganesh2012} and the (Z3) bit-vector solver.

For the UNSAT returned at line \ref{alg:unsat2}, we show
that transformations impacting it are all satisfiability-preserving. 
In particular, the transformations at $(i)$ line
\ref{alg:conv1} $(ii)$ lines \ref{alg:conv2_s}-\ref{alg:conv2_e} $(iii)$ line
\ref{alg:merge_s}-\ref{alg:merge_e} and $(iv)$ line \ref{alg:conv3} are
satisfiability-preserving: 
$(i)$ The DNF conversion at line \ref{alg:conv1} is 
obviously satisfiability-preserving. 
$(ii)$ Line \ref{alg:conv2_s} is a variant of the sound arrangement generation 
method mentioned in Makanin's paper \cite{makanin}. It is 
satisfiability-preserving because each arrangement is a finite set of equations 
implied by the input system of equations. Besides, we extract length constraints 
from arrangements. If they conflict with the existing bit-vector constraints, we 
drop the corresponding arrangements. As we assume the bit-vector theory is sound, 
this step is also satisfiability-preserving. 
$(iii)$ Lines \ref{alg:merge_s}-\ref{alg:merge_e} systematically enumerate all feasible options to further split word equations \cite{cav15}. This step
is satisfiability-preserving. $(iv)$ Line \ref{alg:conv3} derives simpler equations by a satisfiability-preserving rewriting \cite{cav15}.
\end{proof}

\end{document}